\newtheorem{theorem}{Theorem}[section]
\newtheorem{definition}[theorem]{Definition}
\newtheorem{proposition}[theorem]{Proposition}
\newtheorem{lemma}[theorem]{Lemma}
\newcommand{\nb}[1]{{\color{blue} #1}}
\begin{document}

\title{
 A Provenance Tracking Model for Data Updates
 %A Provenance Tracking Proof Calculus for Series-Parallel Updates
}

\author{
 Gabriel Ciobanu
 \qquad\quad
 Ross Horne 
\institute{
 Romanian Academy, Institute of Computer Science, \\
 Blvd. Carol I, no. 8, 700505 Ia{\c{s}}i, Romania 
}
 \email{\quad gabriel@info.uaic.ro \qquad ross.horne@gmail.com}
}

\def\titlerunning{A Provenance Tracking Model for Data Updates}
\def\authorrunning{G. Ciobanu \& R. Horne}

\maketitle

\begin{abstract}
%Current trends in publishing data demand models involving concurrency which are data-centric.
For data-centric systems, provenance tracking is particularly important when the system is open and decentralised, such as the Web of Linked Data.
In this paper, a concise but expressive calculus which models data updates is presented. The calculus is used to provide an operational semantics for a system where data and updates interact concurrently. The operational semantics of the calculus also tracks the provenance of data with respect to updates. This provides a new formal semantics extending provenance diagrams which takes into account the execution of processes in a concurrent setting. Moreover, a sound and complete model for the calculus based on ideals of series-parallel DAGs is provided.
The notion of provenance introduced can be used as a subjective indicator of the quality of data in concurrent interacting systems.
% and works alongside existing notions of provenance.

\end{abstract}

\section{Introduction}

There is a growing trend to publish data openly on the Web. This movement is gaining significant momentum as the governments of several countries and numerous other organisations adopt common principles for publishing data~\cite{Bizer2009}.
Data published according to these principles is referred to as Linked Data, due to the use of URIs to establish links between published data.
By establishing links between arbitrary data sets, significant problems emerge that are of a different flavour to those associated with traditional closed databases.

%Initially, this may appear be a simple problem of taking existing databases and providing a Web facing front end.
%However, the aims of  are more ambitious. %The importance of emerging challenges are highlighted.

Many of the new problems which emerge in this scenario are due to the the decentralised nature of the published data. Some significant challenging problems include: the efficient execution of distributed queries and processes which exploit multiple data sources; the impossibility of enforcing a global schema on data; the lack of boundaries for data ensuring the impossibility of complete results; and establishing global standards for data formats and protocols.

This work considers another essential problem, which reflects the diversity of published data. The challenge considered here is that each piece of data published has a varying degree of trust or relevance. A user may consider data published by the BBC to be more trustworthy than data published on a personal blog. However, if the blog is run by a political activist that the consumer of data approves of, then the blog may be more relevant. Thus data should not be associated with a specific trust measure. Instead, some extra information about the data should be tracked, i.e.~the provenance of data. From the extra information provided by the provenance of data, the consumer may judge the quality of the associated data according to their own policy.

Provenance can track several characteristics of the origin of data. Characteristics include ``who'' has influenced the data, ``where'' the data has been located, and ``how'' the data is produced~\cite{Cheney2009}. For Linked Data, a basic notion of ``where'' provenance called a named graph, which indicates where the data is located now, is the recognised standard~\cite{Carroll2005}. In related work, a model extending named graphs is used to track more comprehensive ``where'' and ``who'' provenance~\cite{Dezani2012}. The related work associates trees of identifiers for agents and locations with data. This allows a history of where the data has been published and who published it to be tracked.

This work focuses on a form of ``how" provenance. This form of ``how" provenance tracks causal relationships between stored data and data that was used to produce the data~\cite{Cheney2010}.
For instance, due to a change in usage of a building, data about the building may be updated. The updates may replace or extract information from the original data. Thus ``how" provenance can be used to determine how old data influenced the new data with respect to an update.

The notion of ``how'' provenance investigated is strongly related to event based models of causality~\cite{Boudol1989,Pratt1986}.
This model clarifies, for the first time, the relationship between concurrent process and the provenance diagrams that they produce.
%The processes are considered to be concurrent updates which act on stored data.
An operational semantics formalises the operational behaviour of processes while recording the provenance associated with the resulting data.
The model presented provides insight that may be used to refine the definition of provenance diagrams. Provenance diagrams that arise from concurrent updates are guaranteed to be in a particular (series-parallel) form. This insight is a contribution to the effort to establish a common notion of a provenance diagram~\cite{Moreau2011}. Furthermore, the model presented is proven to be sound and complete. The formal model provides a foundation for investigating problems associated with tracking and exploiting the provenance of data, including querying provenance~\cite{Buneman2006,Anand2010}, and employing trust metrics~\cite{Green2007}.

\section{Causal Dependencies in Provenance Diagrams}
\label{section:motivation}

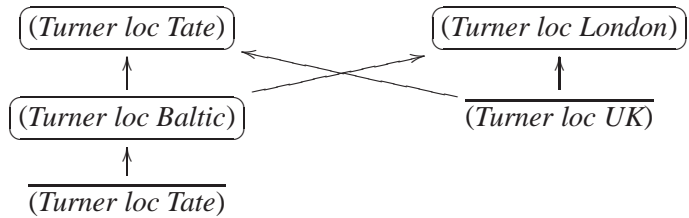
\begin{figure}[b]
\[
\xymatrix@C=2pc@R=1pc{
 \ovalb{\triple{\Turner}{\location}{\Tate}}
 &\quad&
 \ovalb{\triple{\Turner}{\location}{\London}}
 \\
 \ovalb{\triple{\Turner}{\location}{\Baltic}} \ar[u]\ar[urr]
 &\quad&
 \co{\triple{\Turner}{\location}{\UK}} \ar[u]\ar[ull]
 \\
 \co{\triple{\Turner}{\location}{\Tate}} \ar[u]
}
\]
\caption{The Turner Prize is held at the Tate Britain in London. However, in 2011 it was held in The Baltic Gallery in Gateshead, but returned to the Tate Britain in 2012. The data in the above diagram is about the location of the Turner Prize. Edges are causal relationships indicating the data consumed to produce new data as the location of the Turner Prize is updated.}
\label{figure:recipe}
\end{figure}

This work focusses on a particular aspect of provenance tracking. The aspect considered is a form of ``how" provenance, which indicates how old data contributed to producing new data.
The consensus in the provenance community is that provenance diagrams which record this information form a directed acyclic graph (DAG), where the edges are transitively closed.
A standard format for representing provenance, called the Open Provenance Model~\cite{Moreau2011}, encompasses this notion of provenance.
The informal definitions provided by the standard are as follows.
%Here one particular aspect of OPM is considered.

\begin{comment}
\begin{wrapfigure}{r}{0.3\textwidth}
\vspace{-6ex}
\[
\xymatrix@C=1pc@R=1pc{
 \ovalbox{\textit{water}}
 &
 \ovalbox{\textit{flour}}
 \\
% \framebox{\textit{mix}} \ar[u]\ar[ur] \\
 \ovalbox{\textit{dough}} \ar[u]\ar[ur]
 &
 \ovalbox{\textit{pan}}
 \\
% \framebox{\textit{cook}}\ar[u]\ar[ur] \\
 \ovalbox{\textit{chapati}} \ar[u]\ar[ur]
 &
 \ovalbox{\textit{pan}} \ar[ul]\ar[u]
}
\]
\caption{First water and flour are mixed, then the mixture is cooked in a pan.}
\label{figure:recipe}
\vspace{-4ex}
\end{wrapfigure}
\end{comment}

%A provenance format is not generic, since it depends on the nature of the processes which record data and the form of the data. 

%An artefact is an immutable piece of state. 
For this work, \textit{artefacts} are data tuples.
The \textit{was derived from} relation between artefacts is such that if there is an edge from artefact $\annotate{d_2}$ to artefact $\annotate{d_1}$, then there is a causal relationship that indicates that $\annotate{d_1}$ needs to have been generated to enable $\annotate{d_2}$ to be generated.
The standard defines a multi-step was derived from relation. This is simply the transitive closure of the \textit{was derived from} relation, indicating that an artefact had an influence on another artefact.

A provenance diagram that indicates the provenance of two stored pieces of data, where the stored data is indicated by an over line,
 is presented in Fig.~\ref{figure:recipe}.
The example is used throughout this work and concerns monuments adjacent to the venue of the workshop.

\section{A Syntax and Semantics for Provenance Tracking Data Updates}
\label{section:semantics}

This section introduces the syntax and semantics for a concurrent interacting system that tracks provenance.
The provenance community have introduced provenance structures based on DAGs; therefore a process model which gives rise to DAGs is considered~\cite{Moreau2011}. Unfortunately, many models of concurrency are based on traces or trees rather than DAGs, such as in the calculus and provenance structures introduced in~\cite{Souilah2009}. This limitation is addressed by providing a non-interleaving semantics, inspired by~\cite{Gischer1988}.

\subsection{An Abstract Syntax for Processes}

The grammar for processes is provided in Fig.~\ref{figure:syntax}. The concepts are summarised and made precise by the operational and denotational semantics presented in this work.

\begin{figure}[hbt]
\begin{gather*}
\begin{array}{c}
a\quad\mbox{name}
\qquad
x\quad\mbox{variable}
\\[20pt]
\lambda \Coloneqq x \mid a \quad\mbox{variable or name}
\\[20pt]
\tuple \Coloneqq \lambda \mid \lambda\lambda \mid \lambda\lambda\lambda \mid \hdots \quad\mbox{data tuple}
\end{array}
\qquad\qquad
\begin{array}{rlr}
\Proc \Coloneqq & \cunit                 & \mbox{skip} \\
%       \mid & \zero                   & \mbox{deadlock} \\
       \mid & \tuple                       & \mbox{consume data} \\
       \mid & \co{\tuple}                  & \mbox{stored data} \\
       \mid & \annotate \tuple                & \mbox{artefact} \\
       \mid & \Proc \cthen \Proc            & \mbox{seq} \\
       \mid & \Proc \conc \Proc               & \mbox{par} \\
       \mid & \Proc \oplus \Proc              & \mbox{choose} \\
       \mid & \cexists{x} \Proc           & \mbox{exists} \\
\end{array}
\end{gather*}
\caption{The syntax of processes.}
\label{figure:syntax}
\end{figure}

\paragraph{The data tuples.}

The basic unit of information considered in this work is a tuple of names. Tuples are commonly used to convey data.
%Formal concept analysis is concerned with pairs of names~\cite{Wille2009}.
Linked Data is based on RDF which involves triples of names~\cite{Bizer2009,Horne2011}. RDF makes use of URIs for names, since URIs provide a globally recognised naming system. In Linked Data, often RDF triples are extended to quadruples of URIs where the extra URI indicates where the triple is located~\cite{Carroll2005}. This provides a basic notion of ``where'' provenance. This notion of ``where'' provenance is extended in~\cite{Dezani2012}.
%Tuples also represent rows of databases~\cite{Green2007}.

%\paragraph{Store and consume data.} 
%There

\paragraph{The artefacts.}

A data tuple can be stored, represented as $\co{d}$. Stored data can then be consumed in an interaction with the process $d$.
The result is an artefact $\annotate{d}$ used to explicitly track interactions which have occurred. An artefact is used to record a data tuple involved in an interaction. Artefacts are used to capture ``how'' provenance.

%\paragraph{The unit.} 
%Zero represent deadlock or failure.

\paragraph{The multiplicatives.}

There are two multiplicative operators. The ``par'' multiplicative represents the parallel composition of processes where interactions between processes are permitted. The ``seq'' multiplicative represents the strict sequential composition of processes, where the first process must terminate before the second process begins, hence the second process is causally dependent on the first process.
There is one unit for the multiplicatives, namely skip, which represents a successful action with no side effects.

\paragraph{The additives.}

There are two additives: $\oplus$ represents a choice between two branches; $\exists$ represents a choice between all possible name substitutions for the bound variable.

\subsection{Operational Semantics of Processes}

Deductive systems are typically presented using inference rules applied at the base of a syntax tree, as in the sequent calculus.
%The structural rules of a system determine whether additional properties.
%Each sequent in linear logic forms a commutative semigroup of formulae, due to exchange~\cite{Girard1987}.
%For cyclic linear logic, sequents are semigroups of formulae, accompanied by a cycling action~\cite{Yetter1990}.
%The sequents of classical logic deal with semilattices of formulae, due to exchange and contraction.
However, such systems are unsuited to systems which mix commutative and non-commutative operators~\cite{Guglielmi2007}. For this reason, a deep inference style of presentation is adopted, where inference rules can be applied at an arbitrary depth in a formula.

\begin{figure}
\begin{gather*}
\cunit \cthen Q \equiv Q \cthen \cunit \equiv Q \conc \cunit \equiv Q
\qquad
P \conc (Q \conc R) \equiv (P \conc Q) \conc R
\qquad
P \cthen (Q \cthen R) \equiv (P \cthen Q) \cthen R
\qquad
Q \conc R \equiv R \conc Q
\\[12pt]
\qquad
\left(P \oplus Q\right) \oplus R \equiv P \oplus \left(Q \oplus R\right)
\qquad\quad
P \oplus Q \equiv Q \oplus P
\qquad\quad
P \oplus P \equiv P
\\[12pt]
\left(P \oplus Q\right) \cthen R \equiv \left(P \cthen R\right) \oplus \left(Q \cthen R\right)
\qquad
P \cthen \left(Q \oplus R\right) \equiv \left(P \cthen Q\right) \oplus \left(P \cthen R\right)
\qquad
\left(P \oplus Q\right) \conc R \equiv \left(P \conc R\right) \oplus \left(Q \conc R\right)
\\[12pt]
\cexists{x} \left(P \oplus Q\right) \equiv \cexists{x} P \oplus \cexists{x} Q
\qquad\qquad
\cexists{x} \cunit \equiv \cunit
\\[12pt]
\cexists{x} \left(P \conc S\right) \equiv \cexists{x} P \conc S
\qquad\quad
\cexists{x} \left(S \cthen Q\right) \equiv S \cthen \cexists{x} Q
\qquad\quad
\cexists{x} \left(P \cthen S\right) \equiv \cexists{x} P \cthen S
\\
\mbox{where $S$ is a process where $x$ does not appear free}
\end{gather*}
\caption{The structural congruence, which can be applied at any point in a derivation.}
\label{figure:struc}
\end{figure}

A structural congruence which extends $\alpha$-conversion is introduced, in Fig.~\ref{figure:struc}, which is used to rearrange processes. The structural congruence ensures that the order of composition matters for sequential composition, but does not matter for parallel composition. For simplicity, both parallel composition and sequential composition share the same unit. The structural congruence handles contraction for choice, using idempotency. The other rules of the structural congruence determine how operators distribute over each other. Distributivity properties are used in related models of concurrency~\cite{Gischer1988,Hoare2011}. Note that this selection of rules is not minimal; however they are used in Sec.~\ref{section:denotation} to rewrite processes into normal forms, thereby simplifying the completeness proof.
%The structural congruence fulfils the role of the sequent and structural rules in other proof systems. As with a structural congruence in a process calculus, the structural congruence can be applied at an arbitrary depth inside a process.

\begin{comment}
\begin{definition}
The structural congruence is defined
%in Fig.~\ref{figure:struc},
such that $\left(\Proc, \,\cthen\,, \cunit \right)$ forms a monoid and $\left(\Proc, \,\conc\, , \cunit \right)$ forms a commutative monoid.
\end{definition}
\end{comment}

A deductive system is presented in Fig.~\ref{figure:deduction}. Deductions may be applied at any depth in a process, as with the structural congruence. Deductions are presented with the premise above the line and the conclusion below the line.

\begin{figure}[h]
\begin{gather*}
\begin{prooftree}
%\context
{
 \annotate{d}
}
\justifies
%\context
{
 \co{d} \conc d
}
\using
\textit{interact}
\end{prooftree}
\qquad
\begin{prooftree}
%\context
{
 \left(P \conc Q\right)
 \cthen
 \left(P' \conc Q'\right)
}
\justifies
%\context
{
 \left(P \cthen P'\right)
 \conc
 \left(Q \cthen Q'\right)
}
\using
\textit{sequence}
\end{prooftree}
\qquad
\begin{prooftree}
%\context
{P}
\justifies
%\context
{
 P \oplus Q
}
\using
\textit{choice}
\end{prooftree}
\qquad
\begin{prooftree}
%\context
{P\sub{x}{a}}
\justifies
%\context
{
 \exists{x} P
}
\using
\textit{exists}
\end{prooftree}
\end{gather*}
\caption{The deductive system for processes. All deductions can be applied in any context.}
\label{figure:deduction}
\end{figure}

\begin{comment}
\begin{prooftree}
%\context
{R \oplus R}
\justifies
%\context
{
 R
}
\using
\textit{contraction}
\end{prooftree}
\\[20pt]
\begin{prooftree}
%\context
{R}
\justifies
%\context
{
 R \oplus S
}
\using
\textit{left choice}
\end{prooftree}
\qquad
\begin{prooftree}
%\context
{
 \cunit
}
\justifies
%\context
{
 \annotate{d}
}
\using
\textit{forget}
\end{prooftree}
\qquad
\qquad
\begin{prooftree}
%\context
{
 P
}
\justifies
%\context
{
 \top
}
\using
\textit{top}
\end{prooftree}
\end{comment}

\paragraph{The interact rule.}

The interact rule only applies to tuples. The rule indicates that a stored tuple is consumed by the process which deletes that triple. The result of the interaction is an artefact that records the consumed tuple.
%The forget rule allows provenance annotations to be removed.

\paragraph{The sequence rule.}

The sequence rule reorders processes composed in parallel. The premise is more deterministic than the conclusion. The premise decides which part of the process will execute first; whereas the conclusion leave open several other opportunities. This rule allows parts of a process to travel to the intended location where they will interact. This rule appears in related models of concurrency~\cite{Gischer1988,Guglielmi2007,Prisacariu2010}.

\paragraph{The additives.}

The premises of the additives indicate the branch that is chosen. For choice, either the left or the right branch is chosen. For exists, any name may be substituted for the bound variable. This kind of choice is known as external choice in process calculi, where exists is an infinite external choice.

\begin{comment}
\begin{theorem}
$\left(\Proc, \oplus \right)$ forms a sup-lattice.
$\left(\Proc, \,\cthen\,, \oplus, \cunit \right)$ forms an monoid in sup-lattice.
$\left(\Proc, \,\conc\,, \oplus, \cunit, \zero\right)$ forms a commutative idempotent semiring.
\end{theorem}
\end{comment}

\section{A Process Calculus for Provenance Tracking Updates}
\label{section:processes}

This section identifies a sub-grammar of processes that model certain systems. The systems modelled are those which involve stored data composed in parallel with updates. The updates involve the removal of some stored data satisfying a query, followed by the insertion of some new stored data.

The operational semantics for processes are provided by the rules of the system in the previous section. A system can evolve to a given state if and only if the new state entails the original state. Notice that implication is in the opposite direction to the evolution of the system. The direction of implication is in line with related approaches to operational semantics~\cite{Kobayashi1993}.

\begin{figure}[h]
\begin{gather*}
\begin{array}{rl}
\Data \Coloneqq & \cunit \\
           \mid & \co{d} \\
           \mid & \Data \conc \Data
\end{array}
\qquad\qquad
\begin{array}{rl}
\Update \Coloneqq & \Query \cthen \Data    \\
             \mid & \Update \oplus \Update    \\
             \mid & \cexists{x} \Update   \\
\end{array}
\\[10pt]
\begin{array}{rl}
\Query \Coloneqq & \cunit \\
            \mid & d     \\
            \mid & \Query \conc \Query   \\
            \mid & \Query \oplus \Query \\
            \mid & \cexists{x} \Query   \\
\end{array}
\qquad\qquad
\begin{array}{rl}
\System \Coloneqq & \cunit         \\
              \mid & \annotate d      \\
              \mid & \Update       \\
              \mid & \co{d}        \\
              \mid & \System \cthen \System  \\
              \mid & \System \conc \System     \\
\end{array}
\end{gather*}
\caption{Sub-algebras of processes for data, queries, updates and systems.}
\end{figure}

\paragraph{Data.} 

Data simply represents zero or more stored data atoms. The following presents two stored triples in RDF format, which consist of three URIs: the subject, property, and object.
\[
\co{
 \triple{\Sage}{\type}{\Hall}
}
\conc
\co{
 \triple{\Baltic}{\type}{\Gallery}
}
\]
Note that all names are active URIs which link to real published Linked Data. The reader is invited to follow the URIs to witness the examples in a real context.
%understand the example resources.

\paragraph{Queries.}

Parallel composition $\conc$ and choice $\oplus$ are exploited to model the following queries. As in~\cite{Horne2011}, the existential quantifier is used to select URIs which occur in data. The following pattern uses choice to select between two objects. This example discovers a concert hall located in either Newcastle or Gateshead.
\[
\cexists{x}
\left(
 \triple{x}{\type}{\Hall}
 \conc
 \left(
  \triple{x}{\location}{\Newcastle}
  \oplus
  \triple{x}{\location}{\Gateshead}
 \right)
\right)
\]
Note that a tighter operational semantics could be provided by using a tensor product to join queries~\cite{Horne2011}. A tensor product ensures both parts of a query are answered atomically. Unfortunately, the calculus for Linked Data in~\cite{Horne2011} has an interleaving semantics, which would give rise to trees of provenance diagrams as in~\cite{Souilah2009}.
Future work would be to combine the strengths of both calculi.
%A query formed using tensor cannot be partially answered.
%This leads to a semantics closer to BV\cite{Guglielmi2007}. 
%However, parallel composition is sufficient for this discussion. 

\paragraph{Updates.}

The following is an example of an update which applies to some stored data.
The existential quantification discovers a name which is used in the delete statement and the data stored after the delete.
The Baltic Art Gallery is a converted flour mill. The update turns a depiction of the old flour mill into a depiction of the new art gallery.
\[
\begin{array}{l}
\co{\triple{\Mill}{\depiction}{\photo}}
\conc 
\cexists{x}
\left(
 \triple{\Mill}{\depiction}{x}
 \cthen
 \co{
  \triple{\Baltic}{\depiction}{x}
 }
\right)
\end{array}
\]
The above process is provable from the following process, using the exists, sequence and interact rules. This means that the system above can evolve to the system below.
\[
\begin{array}{l}
\annotate{\triple{\Mill}{\depiction}{\photo}}
\cthen
\co{
 \triple{\Baltic}{\depiction}{\photo}
}
\end{array}
\]
Notice that the original stored triple appears as an artefact, which the new 
triple is dependent on. This provides ``how" provenance that indicates the old triple used to create the new triple.

\begin{comment}
\justifies
 \left(
  \co{\triple{\xres{Chapati}}{\xdbp{ingredient}}{\xres{Atta\_flour}}}
  \conc
  \triple{\xres{Chapati}}{\xdbp{ingredient}}{\xres{Atta\_flour}}
 \right)
 \cthen
 \co{\triple{\xres{Chapati}}{\xdbp{ingredient}}{\xres{Wheat\_flour}}}
\end{prooftree}
\justifies
\co{\triple{\xres{Chapati}}{\xdbp{ingredient}}{\xres{Atta\_flour}}}
\conc
\left(
 \triple{\xres{Chapati}}{\xdbp{ingredient}}{\xres{Atta\_flour}}
 \cthen
 \co{\triple{\xres{Chapati}}{\xdbp{ingredient}}{\xres{Wheat\_flour}}}
\right)
\end{prooftree}
\end{comment}

\paragraph{Distinctions between execution paths.}

There are multiple ways of evaluating processes. Different methods of evaluation can give rise a different provenance. Here three distinct executions of the same process are presented to demonstrate the complexity of provenance tracking in a concurrent setting.

An example which involves two updates executed in parallel is presented below. 
It is a common misconception that The Sage and Baltic Art Gallery are prominent monuments in Newcastle. In reality they are located in Gateshead on the opposite bank of the river Tyne\footnote{Indeed the venue of FOCLASA 2012 is also in Gateshead, rather than in Newcastle.}.
The updates transform the location of these monuments from Newcastle to Gateshead.
\[
\begin{array}{l}
\co{
 \triple{\Sage}{\location}{\Gateshead}
}
\conc
\left(
 \triple{\Sage}{\location}{\Gateshead}
 \cthen
 \co{
  \triple{\Sage}{\location}{\Newcastle}
 }
\right)
\conc
\\
\co{
 \triple{\Baltic}{\location}{\Gateshead}
}
\conc
\left(
 \triple{\Baltic}{\location}{\Gateshead}
 \cthen
 \co{
  \triple{\Baltic}{\location}{\Newcastle}
 }
\right)
\end{array}
\]
The process below yields the process above, using the sequence rule.
The two updates occur independently, hence each provenance is independent.
\[
\begin{array}{l}
\left(
 \annotate{\triple{\Sage}{\location}{\Gateshead}}
 \cthen
 \co{
  \triple{\Sage}{\location}{\Newcastle}
 }
\right)
\conc
\\
\left(
 \annotate{\triple{\Baltic}{\location}{\Gateshead}}
 \cthen
 \co{
  \triple{\Baltic}{\location}{\Newcastle}
 }
\right)
\end{array}
\]
The process below yields both process above. This suggest that the two updates were combined before they were applied, hence data produced by each update is dependent on the artefact of the other update. Therefore the process below has stronger dependencies than the process above.
\[
\begin{array}{l}
\left(
 \annotate{\triple{\Sage}{\location}{\Gateshead}}
 \conc
 \annotate{\triple{\Baltic}{\location}{\Gateshead}}
\right)
\cthen
\left(
 \co{
  \triple{\Sage}{\location}{\Newcastle}
 }
 \conc
 \co{
  \triple{\Baltic}{\location}{\Newcastle}
 }
\right)
\end{array}
\]
Indeed the above process can be refined further to impose a sequential dependency on the artefacts. Thus the execution of the concurrent processes greatly affects the form of ``how" provenance.
%The process below also entails the initial state. Here the dependencies are greater, since an order is imposed upon the inputs and the outputs of the combined update.
\begin{comment}
\[
 \annotate{\textit{onion}}
 \cthen
 \annotate{\textit{carrot}}
 \cthen
 \co{\textit{julienne\_carrots}}
 \cthen
 \co{\textit{onion\_rings}}
\]
\end{comment}
%Notice that the above process defines a trace. However, the previous processes emphasises that the semantics for the calculus requires more than traces. Such a semantics is provided in the next section.

\paragraph{The Turner Prize revisited.}

The operational behaviour which gives rise to the provenance diagram in the introduction can now be expressed. 
The initial configuration is expressed below. It shows two stored triples, an update that moves the exhibition from the Tate Britain to the Baltic and broadens London to the UK, and an update which moves the exhibition back from the Baltic to the Tate Britain.
\[
 \begin{array}{l}
   \co{\triple{\Turner}{\location}{\London}} \conc \co{\triple{\Turner}{\location}{\Tate}}
   \conc \\
  \left(
   \triple{\Turner}{\location}{\Tate}  \conc \triple{\Turner}{\location}{\London}
  \right)
  \cthen
  \left( \co{\triple{\Turner}{\location}{\Baltic}} \conc \co{\triple{\Turner}{\location}{\UK}} \right)
 \conc \\
  \left(
   \triple{\Turner}{\location}{\Baltic}
   \cthen
   \co{\triple{\Turner}{\location}{\Tate}}
  \right)
 \end{array}
\]
By applying the sequence rule several times the processes can be rearranged as follows.
\[
\begin{array}{l}
  \left(
   \triple{\Turner}{\location}{\Tate} \conc \co{\triple{\Turner}{\location}{\Tate}}
   \conc
   \co{\triple{\Turner}{\location}{\London}} \conc \triple{\Turner}{\location}{\London}
  \right)
  \cthen \\
 \left(
  \left(
   \left(\co{\triple{\Turner}{\location}{\Baltic}} \conc \triple{\Turner}{\location}{\Baltic}\right)
   \cthen
   \co{\triple{\Turner}{\location}{\Tate}}
  \right)
  \conc
  \co{\triple{\Turner}{\location}{\UK}}
 \right)
 \end{array}
\]
Finally, by applying the interact rule the delete operations and stored data cancel each other out. The interaction produce the artefacts that record the provenance of the data.
\[
 \begin{array}{l}
  \left(
   \annotate{\triple{\Turner}{\location}{\Tate}}
   \conc
   \annotate{\triple{\Turner}{\location}{\London}}
  \right)
  \cthen \\
 \left(
  \left(
   \annotate{\triple{\Turner}{\location}{\Baltic}} \cthen
   \co{\triple{\Turner}{\location}{\Tate}}
  \right)
  \conc
  \co{\triple{\Turner}{\location}{\UK}}
 \right)
 \end{array}
\]
The next section provides a denotational semantics where the denotation of above process is exactly the provenance diagram in the introduction.

\section{A Denotational Semantics for the Provenance Tracking Calculus}
\label{section:denotation}

This section provides a denotational semantics for the calculus. A denotational semantics provides a sound and complete model which increases confidence in the definition of the calculus. In this case, the semantics of the calculus fulfils an additional purpose. It also makes explicit the connection between certain terms of the calculus and provenance diagrams. Furthermore, a restriction on provenance diagrams that track series-parallel computations is highlighted.

The denotational semantics, similarly to provenance diagrams, is based on directed acyclic graphs (DAGs). The denotation relies on some technical apparatus. Firstly, DAGs are restricted by a forbidden minor property, which guarantees that each DAG arises from applying series and parallel composition to smaller DAGs. Secondly, homomorphisms between DAGs are defined such that the inference rules of the calculus hold. By taking ideals of series-parallel DAGs with respect to these homomorphism, a sound and complete model is obtained.

\subsection{Series-Parallel DAGs and the $N$-free Condition}

This section recalls some standard definitions which are used to build a denotational semantics. The definition of a DAG is standard, as are the definitions of the transitive closure of a graph and the notion of a graph homomorphism. Transitive DAGs are used because provenance diagrams are transitive, and graph homomorphism are used to compare the structure of such diagrams. %A graph homomorphism is a function on the vertices such that the structure of the edges is preserved.%~\cite{Hell2004}.
\begin{definition}
A DAG $D = \left(V, E\right)$ is a digraph with no directed cycles. 
%A transitive DAG is such that if $\left(v,u\right), \left(u,w\right) \in E$, then $\left(v,w\right) \in E$.
%The transitive closure of a DAG is the least transitive DAG dominating the DAG.
Let $A = \left(V, E\right)$ and $B = \left(V', E'\right)$ be graphs. A graph homomorphism is given by a function on vertices $f \colon V \rightarrow V'$ such that if $\left(u,v\right) \in E$ then $\left(f(u),f(v)\right) \in E'$.
Two graphs are isomorphic iff there exists a bijective homomorphism whose inverse function is also a homomorphism.
A transitive digraph is such that if there exists a path from $u$ to $v$, then there exists an edge from $u$ to $v$.
A transitive closure of a digraph $\left(V,E\right)$ is a minimal transitive digraph $\left(V,E'\right)$ such that there exists an injective graph homomorphism from $\left(V,E\right)$ to $\left(V,E'\right)$.
A graph $\left(V,E\right)$ is a sub-graph $\left(V',E'\right)$ if and only if $V \subseteq V'$ and $E = E' \cap V \times V$.
\end{definition}
Several series-parallel digraphs are studied in~\cite{Valdes1979}. Here transitive series-parallel DAGs are defined. The series-parallel restriction on transitive DAGs is required because this work considers provenance diagrams which arise from the execution of series-parallel processes.
% which are a special case of general series-parallel digraphs.
%
\begin{definition}
\label{definition:series-parallel}
The trivial DAG with no vertices is a series-parallel DAG, and the DAG with a single vertex and no edges is a series-parallel DAG. If $G_0 = \left(V_0, E_0 \right)$ and $G_1 = \left(V_1, E_1 \right)$ are series-parallel graphs with disjoint vertices, then the following are series-parallel DAGs.
\begin{itemize}
\item $G_0 \pipar G_1$ defined by $\left(V_0 \cup V_1, E_0 \cup E_1 \right)$.
\item $G_0 \cthen G_1$ defined as the transitive closure of $\left( V_0 \cup V_1, E_0 \cup E_1 \cup \left(L \times R\right)\right)$, where $L$ is the source nodes of $G_0$ and $R$ is the sink nodes of $G_1$.
\end{itemize}
\end{definition}
In structural graph theory it is studied how graph classes either can be defined by forbidden minors, or by being glued together from simple starting graphs (as in the definition above). 
%Many classes of structures, e.g. planar graphs and modular lattices, can be characterised by a forbidden minor property.
A forbidden minor is a sub-graph with a particular form; the forbidden minor for series-parallel 
DAGs has an $N$-shape, as proven in~\cite{Valdes1979}.

\begin{theorem}[Forbidden minor]
\label{theorem:valdes}
A transitive DAG is series-parallel if and only if it does not have a sub-graph isomorphic to
$N = \left(\left\{v_0, v_1, v_2, v_3 \right\}, \left\{(v_2, v_0),(v_3,v_0),(v_3,v_1)\right\}\right)$.
%Furthermore, there exists a linear time recognition procedure for series-parallel DAGs.
\end{theorem}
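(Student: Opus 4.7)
The plan is to prove both directions of the biconditional. For the forward direction, that a series-parallel DAG avoids $N$ as an induced subgraph, I would induct on Definition~\ref{definition:series-parallel}. The base cases are immediate since $N$ has four vertices. For a parallel composition $G_0 \pipar G_1$, any induced copy of $N$ has its three edges connecting all four witnesses, so since no edges cross the factors all four witnesses must lie in the same factor, whereupon the inductive hypothesis applies. For a series composition $G_0 \cthen G_1$, the transitive closure makes the set of cross edges between $V_0$ and $V_1$ very dense; I would then examine each pair of putative $N$-witnesses against the non-edges of $N$ (in particular the missing edge between $v_1$ and $v_2$) to conclude that all four vertices must lie on the same side of the partition, and induction again closes the case.

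For the backward direction I would use strong induction on $|V|$. The cases $|V|\le 1$ are trivial. If the underlying undirected graph of $G$ is disconnected, pick a connected component as $V_0$, let $V_1 = V \setminus V_0$, and observe that $G = G[V_0] \pipar G[V_1]$; both factors are smaller, remain transitive, and still avoid an induced $N$ since the forbidden configuration is hereditary under taking induced subgraphs.

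If the underlying undirected graph is connected, I aim to produce a nontrivial partition $(A,B)$ of $V$ realising a series decomposition in the sense of Definition~\ref{definition:series-parallel}; from such a partition the decomposition $G = G[A] \cthen G[B]$ follows directly. The key lemma is then that every connected, transitive, $N$-free DAG on at least two vertices admits such a partition. My strategy for the lemma is to quotient by the equivalence $u \sim v$ whose classes consist of vertices with identical external in- and out-neighbourhoods, and to argue via the $N$-free condition that the quotient is totally ordered by the induced reachability relation. Cutting this total order at the top class then supplies the required split $A$, with $B$ the vertices lying strictly below.

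The main obstacle is precisely this key lemma, for which I would argue the contrapositive: if no such cut exists then an induced $N$ can be exhibited. Given two classes that are incomparable in the quotient order, represented by vertices $a$ and $b$, connectivity of the underlying undirected graph supplies a short path between them, and transitivity together with acyclicity let one extract from a minimal such obstruction four vertices whose induced sub-DAG is isomorphic to $N$. This extraction is the technical heart of Valdes' theorem, and all of the combinatorial work is concentrated here; the two inductive reductions above are essentially bookkeeping on top of it.
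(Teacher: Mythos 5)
First, note that the paper itself does not prove this theorem: it is imported from \cite{Valdes1979}, so there is no internal proof to compare against and your argument must stand on its own. The forward direction is fine (the underlying undirected graph of $N$ is connected, which handles $\pipar$, and its three non-adjacent pairs must all land on one side of a series composition, which handles $\cthen$). The backward direction, however, has a genuine gap: the key lemma is false as stated. Quotienting by ``identical external in- and out-neighbourhoods'' does not yield a totally ordered quotient for connected, transitive, $N$-free DAGs. Take $V=\left\{a,b,c,d\right\}$ with edges $(b,c),(a,d),(b,d),(c,d)$: this digraph is transitive, its underlying undirected graph is connected, it contains no sub-graph isomorphic to $N$ in the paper's (induced) sense, and it is series-parallel, namely $(a \pipar (b \cthen c)) \cthen d$. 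Yet $a$, $b$ and $c$ have pairwise distinct external in- and out-neighbourhoods, so they fall into three distinct classes, and the class of $a$ is incomparable to those of $b$ and of $c$ under the induced reachability relation. The quotient is not a total order, and ``cutting at the top class'' does not produce the series split on which your induction relies.

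The repair is to use a coarser equivalence: let $u \approx v$ when $u$ and $v$ lie in the same connected component of the \emph{incomparability} graph (the complement of the comparability graph). The lemma that actually drives Valdes' argument is that, for a transitive $N$-free DAG with connected underlying graph and at least two vertices, the incomparability graph is disconnected, and $N$-freeness forces any two of its components to be fully comparable in one consistent direction; the components are then linearly ordered and yield the series decomposition. In the example above the incomparability components are $\left\{a,b,c\right\}$ and $\left\{d\right\}$, recovering the correct cut. Note also that the extraction of an induced $N$ from a failed cut --- which you rightly identify as the technical heart of the theorem --- is precisely the step you leave unargued, so even after replacing the equivalence the proposal remains a plan rather than a proof.
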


\begin{comment}
!!!!!!!!!Consider the example of cooking dough in a pan to make a chapati, from the introduction. It may be tempting to represent this by showing that the chapati is produced from both the dough and the pan, but the pan dependent only on itself. However, this forms an $N$-shape, as illustrated by the first diagram below. The second diagram is one of several examples which have no $N$-shape as a subgraph.
\[
\xymatrix@C=1pc@R=1pc{
\textit{pan}  &  \textit{dough}
\\
\textit{pan}\ar[u]  &  \textit{chapatti}\ar[u]\ar[ul]
}
\qquad\qquad
\xymatrix@C=1pc@R=1pc{
\textit{dough}  &  \textit{pan}
\\
\textit{chapatti}\ar[u]\ar[ur]  &  \textit{pan}\ar[u]\ar[ul]
}
\]
\end{comment}

Notice that use of transitive DAGs is motivated, by provenance diagrams; while the series-parallel restriction is motivated by concurrent processes. Thus the model studies structures which respect both provenance and the processes which track the provenance.

%Transitive series-parallel DAGs are well established as a model for concurrent computation.

\subsection{Interacting Series-Parallel DAGs Labelled with Data}

The notion of a series-parallel DAG is extended with labels. The labels allow data to be accommodated in the model. Also the notion of a homomorphism is extended to allow interactions between data and operations on data which give rise to artefacts.

The definition of a labelled graph is standard. A special kind of homomorphism is defined on labelled DAGs. This smoothing homomorphism is bijective, but does not define an isomorphism. Thus vertices are preserved, but extra edges may appear. 

\begin{definition}
Fix $\Sigma$ as the set of labels which are either tuples $d$, stored tuples $\co{d}$ or artefacts $\annotate{d}$. 
A labelled graph $\left(V, E, \mu\right)$ is such that $\left(V, E\right)$ is a graph and $\mu : V \rightarrow \Sigma$ is a labelling function from vertices to labels. 
%A series-parallel labelled DAG is a labelled DAG such that the underlying DAG is also series-parallel.
Let $A = \left(V, E, \mu\right)$ and $B = \left(V', E', \mu'\right)$ be labelled DAGs.
A labelled homomorphism $f$ from $A$ to $B$ is such that $f$ is a graph homomorphism from $\left(V, E\right)$ to $\left(V', E'\right)$ and for all vertices $u$, $\mu(u) = \mu'(f(u))$. A smoothing homomorphism is a bijective labelled homomorphism.
\end{definition}

The notation of a smoothing homomorphism defined above is used to characterise the sequence rule. To capture both the sequence rule and the interact rule, interaction homomorphisms are introduced. The definition involves a coherence condition which captures the conditions under which an interaction may occur. Two vertices can interact if they have complementary labels and they are in parallel with each other. This leads to the following definition.
%If there is an edge between the two vertices, then the labels may only be observed separately so cannot interact.
\begin{definition}
%In a transitive graph, two nodes are connected if there exists an edge between the nodes in either direction.
For a labelled graph $A = \left(V, E, \mu\right)$, define $u \inter_d v$ in $A$ such that there is no directed path between $u$ and $v$, and either $\co{d} = \mu(u) = \co{\mu(v)}$ or $\co{\mu(u)} = \mu(v) = \co{d}$. Let $A = \left(V, E, \mu\right)$ and $B = \left(V', E', \mu'\right)$ be labelled DAGs. An interaction homomorphism $f$ from $A$ to $B$ is a labelled graph homomorphism such that $f$ is onto and, if $f(u) = f(v)$, one of the following hold: either $u \inter_d v$ in $A$ and $\mu'(f(u)) = \annotate{d}$; or $u = v$ and $\mu(u) = \mu'\left(f(u)\right)$.
\end{definition}

%Notice that, since smoothing homomorphisms are graph homomorphisms, all edges from contracted vertices are preserved. Also, since the codomain of the homomorphism is also a DAG, interactions must preserve the sequencing.

The following example demonstrates two compatible vertices mapped to the same vertex by an interaction homomorphism.
\[
\vcenter{
\xymatrix@C=1pc@R=0.5pc{
 a&b&\co{b} \\
 & d \ar[ul]\ar[u] & \co{d} \ar[u] \\
 & \co{b} \ar[u] & \co{a}\ar[u]
}
}
\qquad
\longrightarrow
\qquad
\vcenter{
\xymatrix@C=1pc@R=0.5pc{
 a&b&\co{b} \\
 & \annotate{d} \ar[ul]\ar[u]\ar[ur] \\
 & \co{b}\ar[u] & \co{a}\ar[lu]
}
}
\]
Note that the diagrams in examples represent equivalence classes of labelled graphs up to labelled graph isomorphism. Thus only the labels and not the underlying vertices are indicated. The same practice is followed when presenting provenance diagrams.

The homomorphisms defined over labelled DAGs are used to generate ideals. Ideals are sets of labelled series-parallel DAGs closed with respect to either smoothing or interacting homomorphisms.
%Notice that the isomorphism class of a labelled DAG in an ideal is included in an ideal.

\begin{definition}
A smoothing/interacting ideal $I$ is a set of labelled series-parallel DAGs such that if $A \in I$ and there exists a smoothing/interacting homomorphism $f : A \rightarrow B$, then $B \in I$.
For any set of labelled series-parallel DAGs $P$ the smoothing/interacting ideal closure of $P$, denoted $\iota_{s} P$/$\iota_{i} P$, is the least ideal containing $P$, defined as the intersection of all smoothing/interacting ideals $I$ such that $P \subseteq I$.
\end{definition}

These ideals are employed to denote processes in the next section. Ideal closure is essential for the denotation of parallel composition.

\begin{comment}
\begin{definition}[isomorphism]
Two labelled DAGs are isomorphic iff there exists a bijective homomorphism between the two whose inverse function is also a homomorphism. %The isomorphism class of a pomset $A$ is denoted $\isocl{A}$.
\end{definition}
\end{comment}

%The graph isomorphism problem in general is NP. However, the problem for series-parallel DAGs is polynomial\cite{Valdes1978}. The isomorphism is the same as for arbitrary graphs.

\subsection{Correctness of the Denotational Semantics}
%Ideals in the Space of Series-Parallel Labelled DAGs}

%Since smoothing homomorphisms are a special case of interaction homomorphism, $\iota_{s}(P) \subseteq \iota_{i}(P)$. 

The denotational semantics for processes is defined using the ideals introduced in the previous section. Most operations on ideals are the obvious point-wise extension of the corresponding operator. The main subtlety is that parallel composition introduces new possibilities for both smoothing and interaction, which are not represented by the point-wise parallel composition of ideals. Thus the ideal closure is employed to denote parallel composition. Valuations are used to represent substitutions, which are required to denote existential quantification.

\begin{definition}[denotation]
A valuation $v$ is a mapping from variables to names. Let $v \mathclose{\left[x \mapsto a\right]}$ be the valuation which is the same as $v$ except at $x$ where it maps to $a$. The effect of a valuation on a label is defined as follows.
\begin{gather*}
\valu{v}{\annotate{d}} = \annotate{\valu{v}{d}}
\qquad
\valu{v}{\left(\co{d}\right)} = \co{\valu{v}{d}}
\qquad
\valu{v}{\left(\lambda_0..\lambda_n\right)} = \valu{v}{\lambda_0}..\valu{v}{\lambda_n}
\qquad
\valu{v}{a} = a
\qquad
\valu{v}{x} = v(x)
\end{gather*}
The denotation of a process with respect to a valuation $v$ satisfies the following, where $h \in \left\{ s,i \right\}$, $\epsilon$ is the set containing the empty labelled graph, and $\element{l,v}$ is the equivalence class of labelled graph with one vertex labelled with $l^v$ with respect to labelling isomorphism.
\begin{gather*}
\denotation{v}{\cunit}_h = \epsilon
\qquad\quad
\denotation{v}{l}_h = \element{l,v}
\qquad\quad
\denotation{v}{\exists{x} P}_h = \bigcup_{a \in \names} \denotation{v \mathclose{\left[x \mapsto a\right]}}{P}_h
\\[10pt]
\denotation{v}{P \oplus Q}_h = \denotation{v}{P}_h \cup \denotation{v}{Q}_h
\qquad\qquad
\denotation{v}{P \cthen Q}_h = \left\{ A \cthen B \mid (A,B) \in \denotation{v}{P}_h \times \denotation{v}{Q}_h \right\}
\\[12pt]
\denotation{v}{P \pipar Q}_h = \iota_h \left\{ A \pipar B \mid (A,B) \in \denotation{v}{P}_h \times \denotation{v}{Q}_h \right\}
\end{gather*}
\end{definition}

All the operations used in the denotational semantics preserve ideals, as verified by the following proposition.
Therefore the denotational semantics is a well defined mapping from processes to ideals.
\begin{proposition}
The following are ideals: $\epsilon$, $\element{l,v}$, the union and intersection of sets of ideals, and the point-wise sequential composition of ideals.
\end{proposition}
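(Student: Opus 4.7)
The plan is to verify the closure condition for each of the five items, treating smoothing and interacting ideals uniformly since the arguments run in parallel.

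For $\epsilon$, any homomorphism out of the empty graph has empty image: smoothing homomorphisms are bijective and interacting homomorphisms are onto, so the only element of $\epsilon$ is mapped only to itself. For $\element{l,v}$, suppose $f$ is a homomorphism from the single vertex $u$ labelled $l^v$. If $f$ is smoothing, the image is a single vertex with the same label by bijectivity and label-preservation. If $f$ is interacting, onto-ness yields a single image vertex $f(u)$, and applying the coherence condition to $f(u) = f(u)$ with $u = u$ forces $\mu(u) = \mu'(f(u))$. Either way the image lies in $\element{l,v}$.

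The union and intersection clauses are purely set-theoretic. If $A \in \bigcup_i I_i$ then $A \in I_j$ for some $j$, and closure of $I_j$ places any homomorphic image of $A$ in $I_j \subseteq \bigcup_i I_i$. Dually, $A \in \bigcap_i I_i$ belongs to every $I_i$, so every homomorphic image lies in every $I_i$ and hence in their intersection.

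The main obstacle is the point-wise sequential composition $I \cthen J$. Given $A \cthen B \in I \cthen J$ and a smoothing or interacting homomorphism $f : A \cthen B \to C$, the plan is to exhibit a decomposition $C = A' \cthen B'$ with $A' = C|_{f(V_A)}$ and $B' = C|_{f(V_B)}$ as induced sub-DAGs, such that $f$ restricts to homomorphisms $f|_A : A \to A'$ and $f|_B : B \to B'$ of the same type; closure of $I$ and $J$ then gives $A' \in I$, $B' \in J$, hence $C \in I \cthen J$. Disjointness of $f(V_A)$ and $f(V_B)$ is immediate for smoothing by bijectivity; for interaction it follows from the coherence condition, since $f(u) = f(v)$ with $u \neq v$ requires $u \inter_d v$, i.e.~no directed path between them, but in $A \cthen B$ every vertex of $V_A$ has a directed path to every vertex of $V_B$, excluding cross-half merging. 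The induced sub-DAGs $A'$ and $B'$ are transitive (induced sub-DAGs preserve transitivity) and series-parallel, for any $N$-shaped sub-graph of $A'$ or $B'$ would also be an $N$-sub-graph of $C$, contradicting Theorem~\ref{theorem:valdes}. Finally, transitivity of $C$ together with the $V_A$-to-$V_B$ edges preserved by $f$ ensures that every vertex of $f(V_A)$ has an edge to every vertex of $f(V_B)$ in $C$, yielding the decomposition $C = A' \cthen B'$ as required.
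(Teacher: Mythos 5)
Your proof is correct. The paper itself states this proposition without proof (it is introduced only with ``as verified by the following proposition''), so there is nothing to compare against; your argument supplies the missing verification, and the only case with real content --- point-wise sequential composition --- is handled soundly. The key observations are exactly the right ones: in $A \cthen B$ every vertex of one half is joined by a directed path to every vertex of the other half, so the coherence condition (which demands the absence of such a path between merged vertices) forbids an interaction homomorphism from identifying a vertex of $A$ with a vertex of $B$; the induced sub-DAGs on the two halves of the image inherit transitivity and $N$-freeness from $C$, hence are series-parallel by Theorem~\ref{theorem:valdes}; and since $f$ carries all cross edges of $A \cthen B$ onto $f(V_A) \times f(V_B)$ while acyclicity of $C$ rules out any reverse cross edges, the image really is $A' \cthen B'$. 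One small point worth making explicit: for $\element{l,v}$ under an interaction homomorphism, the first disjunct of the coherence condition cannot apply to $f(u)=f(u)$ because $u \inter_d u$ would force a label to equal its own complement, which is why the second disjunct (label preservation) is forced.
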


Soundness of the calculus defined in Sec.~\ref{section:semantics} with respect to the denotation is straight forward. The proof follows from checking that all equations of the structural congruence hold as set equality of ideals, and that all deductive rules hold as set inclusions of ideals.

\begin{theorem}[soundness]
If $P$ yields $Q$, then, $\denotation{v}{P}_i \subseteq \denotation{v}{Q}_i$ for all valuations $v$.
\end{theorem}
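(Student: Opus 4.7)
The plan is to prove soundness by induction on the derivation witnessing that $P$ yields $Q$, a derivation built from applications of the deductive rules of Fig.~\ref{figure:deduction} interleaved with uses of the structural congruence of Fig.~\ref{figure:struc}, each potentially applied at an arbitrary subterm position. The first step is a monotonicity lemma: by unfolding definitions, every operator occurring in the denotational semantics---$\cup$, point-wise $\cthen$ on ideals, the $\iota_i$-closure of point-wise $\conc$, and the union $\bigcup_{a \in \names}$ indexed over valuation extensions---is monotone in each argument. This yields the context-closure property that, whenever $\denotation{v}{P}_i \subseteq \denotation{v}{Q}_i$ for every $v$, the same inclusion holds for $\denotation{v}{C[P]}_i \subseteq \denotation{v}{C[Q]}_i$ in every process context $C$, by routine structural induction on $C$. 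This reduces the main argument to verifying each equation of $\equiv$ and each atomic deductive rule in isolation.

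For the structural congruence I would verify each equation as an equality of ideals. The monoid laws for $\cthen$ and the commutative-monoid laws for $\conc$ transfer from the corresponding laws of series-parallel DAGs in Definition~\ref{definition:series-parallel}, with $\epsilon$ as the shared unit. Associativity, commutativity and idempotency of $\oplus$ follow from the corresponding properties of $\cup$, and the distributivity equations follow because both point-wise $\cthen$ and point-wise $\conc$ distribute over $\cup$. The scope equations for $\exists$ reduce to the observation that, when $x$ is not free in $S$, $\denotation{v[x \mapsto a]}{S}_i = \denotation{v}{S}_i$ for every $a \in \names$, so the union over $a$ commutes past the composition with $S$.

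Each atomic deductive rule is then verified as an inclusion. The interact rule amounts to exhibiting an interaction homomorphism collapsing the two parallel vertices labelled $\co{d}$ and $d$ into a single vertex labelled $\annotate{d}$, which is immediate from the coherence condition $u \inter_d v$. The choice rule is the trivial inclusion $A \subseteq A \cup B$. The exists rule follows from a substitution lemma identifying $\denotation{v}{P\sub{x}{a}}_i$ with $\denotation{v[x \mapsto a]}{P}_i$, which is one branch of the union defining the denotation of $\exists x\, P$.

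The main obstacle is the sequence rule: given an element $(A \pipar B) \cthen (A' \pipar B')$ in the denotation of the premise, one must show it lies in $\iota_i \{(A \cthen A') \pipar (B \cthen B')\}$, which underlies the denotation of the conclusion. The key geometric step is to exhibit a smoothing (hence interaction) homomorphism from $(A \cthen A') \pipar (B \cthen B')$ to $(A \pipar B) \cthen (A' \pipar B')$: the underlying vertex map is the identity on the disjoint union of vertex sets, and one verifies that every edge of the source appears in the target, the target additionally containing the ``cross'' edges from sink nodes of $A$ into source nodes of $B'$ and from sink nodes of $B$ into source nodes of $A'$ introduced by the outer $\cthen$. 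The only genuinely geometric verification is that the target remains a transitive series-parallel DAG---which follows by direct inspection or by invoking the forbidden-minor characterisation of Theorem~\ref{theorem:valdes}---after which the extension from this base case to arbitrary ideal-closure representatives of the premise exploits the preceding proposition, namely that point-wise sequential composition of ideals is again an ideal.
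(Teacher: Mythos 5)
Your proposal is correct and follows essentially the same route as the paper, which simply asserts that soundness ``follows from checking that all equations of the structural congruence hold as set equality of ideals, and that all deductive rules hold as set inclusions of ideals''; your context-closure lemma, the equation-by-equation and rule-by-rule case analysis, and the smoothing homomorphism from $(A \cthen A') \pipar (B \cthen B')$ to $(A \pipar B) \cthen (A' \pipar B')$ for the sequence rule are precisely the details that sentence leaves implicit. (One minor simplification: the target of that homomorphism is series-parallel by construction via Definition~\ref{definition:series-parallel}, so no appeal to the forbidden-minor characterisation of Theorem~\ref{theorem:valdes} is needed there.)
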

%
%\begin{proof}
%Firstly, the structural congruence is sound. Then each rule can be easily checked.
%\end{proof}

Completeness of the calculus with respect to the denotation is more challenging. The proof follows from interpolation lemmas. An interpolation lemma establishes that if there is a strict inclusion between the denotation of processes then there must be a finite sequence of deductions that can be applied to transform one process into the other process. The trick is to rewrite processes into a normal form and deal with each deductive rule one by one.

Firstly consider series-parallel terms, which are processes which does not feature any choice or exists. Two interpolation lemmas apply to series-parallel terms. The first interpolation lemma, stated below, deals only with the sequence rule. This lemma is closely related to the interpolation lemma established in~\cite{Gischer1988}, where a similar calculus without interactions is considered. Thus only smoothing ideals are treated.

\begin{lemma}[sequence interpolation]
\label{lemma:gischer}
Given two series-parallel terms $P$ and $Q$, if $\gischer{v}{P} \subseteq \gischer{v}{Q}$ for all valuations $v$, then either: $\gischer{v}{P} = \gischer{v}{Q}$ for all valuations $v$; or there exists $R$ such that $\gischer{v}{P} \subset \gischer{v}{R} \subseteq \gischer{v}{Q}$ for all valuations $v$, and $P$ yields $R$ is provable using only the sequence rule.
\end{lemma}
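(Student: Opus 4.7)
The plan is to pass from the algebraic statement to a claim about the canonical labelled series-parallel DAGs $D_P$ and $D_Q$ recursively built via Definition~\ref{definition:series-parallel}. The hypothesis $\gischer{v}{P} \subseteq \gischer{v}{Q}$ unpacks, under smoothing ideal closure, to a label-preserving vertex bijection between $D_P$ and $D_Q$ under which $E(D_Q) \subseteq E(D_P)$. When this inclusion is an equality, the two DAGs are isomorphic, the ideals coincide for every valuation, and the first disjunct holds.

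Otherwise there is a primary (non-transitive) edge $(u,v) \in E(D_P) \setminus E(D_Q)$, since if every primary edge of $D_P$ lay in $E(D_Q)$ then transitivity of $D_Q$ would force $E(D_P) \subseteq E(D_Q)$ and contradict strict inclusion. A basic tool throughout is the $\cunit$-expansion idiom $p \cthen q \equiv (p \conc \cunit) \cthen (\cunit \conc q)$ allowed by the structural congruence, to which the sequence rule deduces $(p \cthen \cunit) \conc (\cunit \cthen q) \equiv p \conc q$, deleting exactly the edge $p \to q$. More generally, at any $\cthen$-node of $D_P$'s canonical decomposition, any two-way grouping of the top-level parallel branches on each side (using $\cunit$ where necessary) yields an instance of the sequence rule that deletes the corresponding bipartite block of cross edges. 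The goal is to exhibit such a grouping at the lowest common ancestor $\cthen$-node of $u$ and $v$ so that the deleted block lies in $E(D_P) \setminus E(D_Q)$ and contains $(u,v)$.

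Constructing this grouping is the main obstacle. The plan is to colour the bipartite substructure at the chosen $\cthen$-node, with children $A$ and $B$, according to the connected components of the sub-bipartite graph $E(D_Q) \cap (V(A) \times V(B))$, and to use the $N$-free characterisation of Theorem~\ref{theorem:valdes} iteratively: any alternating path from $u$ to $v$ in this sub-bipartite graph would force $N$-shaped subgraphs in $D_Q$, so $u$ and $v$ must lie in different components and the colouring yields a non-trivial 2-partition. Lifting this colouring to a grouping of the actual parallel branches of $A$ and $B$ is handled by the $\cunit$-expansion for atomic or fully-sequential branches, and by descending into sub-$\cthen$-nodes in the cases where a branch would otherwise be mis-coloured. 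The resulting $R$ satisfies $\gischer{v}{P} \subsetneq \gischer{v}{R} \subseteq \gischer{v}{Q}$ by soundness combined with the strict edge inclusion, completing the interpolation.
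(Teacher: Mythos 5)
First, a point of reference: the paper does not actually prove this lemma itself; it is attributed to the interpolation lemma of~\cite{Gischer1988}. Your overall strategy---pass to the canonical series-parallel DAGs $D_P$ and $D_Q$ with $E(D_Q)\subseteq E(D_P)$, locate a covering edge $(u,v)\in E(D_P)\setminus E(D_Q)$, and delete a bipartite block of cross edges containing it by one application of the sequence rule padded with $\cunit$---is the right shape and matches the standard argument. The reduction to a single pair of DAGs, the observation that a missing covering edge must exist, and the closing soundness/strictness step are all fine.

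The gap is in the step you yourself flag as the main obstacle, and the claim you use to discharge it is false. You assert that at the lowest common ancestor $\cthen$-node, with children $A$ and $B$, any alternating path from $u$ to $v$ in the retained cross edges $E(D_Q)\cap (V(A)\times V(B))$ would force an induced $N$ in $D_Q$, hence $u$ and $v$ lie in different components. This fails because $D_Q$ may contain edges \emph{inside} $V(A)$ or $V(B)$ that destroy the induced $N$. Concretely, take $P = (v \conc a)\cthen u \cthen b$ and $Q = \left(\left(a \cthen u\right) \conc v\right)\cthen b$ with four distinct labels. Then $E(D_Q)=E(D_P)\setminus\{(u,v)\}$, both DAGs are series-parallel, and at the node with $A = v\conc a$ and $B = u\cthen b$ the retained cross edges are $\{(u,a),(b,a),(b,v)\}$, which connect $u$ to $v$ by the alternating path $u,a,b,v$; no induced $N$ arises because $(b,u)\in E(D_Q)$. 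Accordingly, no grouping of the top-level parallel branches of $A$ and $B$ at this node deletes $(u,v)$ without also deleting an edge of $E(D_Q)$ ($B$ is parallel-irreducible, so one side of its grouping is all of $\{u,b\}$, and both choices of $A_2$ delete a retained edge). The correct move is to re-associate to $\left(\left(v\conc a\right)\cthen u\right)\cthen b$ and apply the sequence rule at the inner node, grouping $a$ with $u$ and $v$ with $\cunit$. So in general the interpolant must be found at a sub-interval of a sequential chain chosen after re-association, not at the $k$-ary LCA node with its immediate parallel branches; proving that a suitable interval and grouping always exist is precisely the inductive content of Gischer's proof, and your phrase about ``descending into sub-$\cthen$-nodes'' addresses only the lifting of the colouring, not the failure of the separation claim itself. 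As written, the central step of the argument does not go through.
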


\begin{comment}
The following lemma is used to move between series-parallel processes and labelled DAGs.

\begin{lemma}
\label{lemma:rho}
For any series-parallel process $P$ and valuation $v$, there exists a series-parallel labelled DAG $D$ such that $\iota D = \dinter{v}{P}$. Conversely, for any series-parallel labelled DAG $D$, there exists a series-parallel process $P$ and valuation $v$ such that $\iota D = \dinter{v}{P}$.
\end{lemma}
\end{comment}

The above result is extended to interacting homomorphism in the following interpolation lemma. The proof of this lemma is an important technical contribution of this work. It shows that, for any strict inclusion between the denotation of series-parallel process, either the sequence rule or the interact rule can be applied.
\begin{lemma}[interaction interpolation]
\label{lemma:interaction}
Given two series-parallel terms $P$ and $Q$, if $\dinter{v}{P} \subseteq \dinter{v}{Q}$ for all valuations $v$, then: either $\gischer{v}{P} \subseteq \gischer{v}{Q}$ for all valuations $v$; or there exists $R$ such that $\dinter{v}{P} \subset \dinter{v}{R} \subseteq \dinter{v}{Q}$ for all valuations $v$ and $P$ yields $R$ is provable using only the interact rule.
\end{lemma}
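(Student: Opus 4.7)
The plan is to reduce the question to the level of labelled series-parallel DAGs. By induction on the structure of series-parallel terms, one first shows that $\dinter{v}{P} = \iota_i \{ D_P^v \}$ for a unique labelled series-parallel DAG $D_P^v$, and similarly for $Q$. The non-trivial inductive step is parallel composition: $\iota_i \{ D_1 \pipar D_2 \}$ already equals the ideal closure of $\{ A \pipar B \mid A \in \iota_i\{D_1\},\ B \in \iota_i\{D_2\} \}$, because an interaction homomorphism on each component lifts to an interaction homomorphism on the parallel composition. Since interaction homomorphisms compose and interaction ideals are closed under this composition, the hypothesis yields, for every valuation $v$, a single interaction homomorphism $g^v : D_Q^v \to D_P^v$.

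I would then split into two cases. Suppose first that $g^v$ is bijective, hence a smoothing, for every $v$. Then $D_P^v$ is a smoothing of $D_Q^v$, and since smoothings compose, $\gischer{v}{P} = \iota_s\{D_P^v\} \subseteq \iota_s\{D_Q^v\} = \gischer{v}{Q}$ for every $v$, establishing the first disjunct. Otherwise, some $g^{v_0}$ performs a proper merge, identifying complementary parallel vertices $u_1, u_2 \in D_Q^{v_0}$ onto an artefact vertex $w \in D_P^{v_0}$ labelled $\annotate{l^{v_0}}$ for some syntactic tuple $l$. The vertex $w$ corresponds to a unique syntactic occurrence $\annotate{l}$ in $P$; define $R$ by replacing this occurrence with $\co{l} \pipar l$, so that $P$ yields $R$ using exactly one application of the interact rule.

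To verify the two required inclusions, note that $D_R^v$ differs from $D_P^v$ by splitting the artefact vertex $w$ into two parallel vertices $w_1, w_2$ labelled $\co{l^v}$ and $l^v$ that inherit all edges incident to $w$. The single-merge map $D_R^v \to D_P^v$ is an interaction homomorphism, hence $\dinter{v}{P} \subseteq \dinter{v}{R}$; strictness follows because interaction homomorphisms are onto while $D_R^v$ has one more vertex than $D_P^v$. For the inclusion $\dinter{v}{R} \subseteq \dinter{v}{Q}$, I would modify $g^v$ by redirecting the two vertices of $D_Q^v$ targeting $w$ to $w_1$ and $w_2$ separately; labels match because the complementarity that justified the merge in $g^v$ gives exactly the two labels $\co{l^v}$ and $l^v$, and edges carry over because every edge incident to $w$ in $D_P^v$ duplicates to both $w_1$ and $w_2$ in $D_R^v$.

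The main obstacle lies in handling valuations $v' \neq v_0$ uniformly. For such a $v'$, the homomorphism $g^{v'}$ might identify a different pair of vertices onto $w$ than $g^{v_0}$ does, or (if $Q$ has an artefact occurrence at the corresponding syntactic position) preserve an artefact of $D_Q^{v'}$ onto $w$. A case analysis using the series-parallel decomposition of $Q$ should show that the syntactic position of $Q$ targeting $w$ must itself be either an occurrence of $\annotate{l'}$ with $l'$ and $l$ forced to coincide under every valuation, or a subterm of the form $\co{l'} \pipar l'$ up to structural congruence and smoothing. In the first sub-case the artefacts are already aligned; in the second sub-case the un-merging transfers, possibly after picking a different artefact in $P$ guaranteed to arise from a merge under every valuation. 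Making this subterm-level compatibility precise, in the style of the analysis underlying Lemma~\ref{lemma:gischer}, is the technical heart of the proof and the step requiring the most careful bookkeeping.
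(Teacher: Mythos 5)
Your proposal follows essentially the same route as the paper's proof: extract a single interaction homomorphism from the ideal inclusion, argue that if it never merges vertices it is a smoothing (first disjunct), otherwise isolate one merged pair $(m,n)\mapsto w$ with $\mu(w)=\annotate{d}$, un-merge $w$ into two parallel vertices inheriting all its edges, and factor the homomorphism through the resulting structure to get $\dinter{v}{P}\subset\dinter{v}{R}\subseteq\dinter{v}{Q}$ with $P$ yielding $R$ by one application of \textit{interact}. The one genuine difference is the order of construction: you define $R$ syntactically, as $\context{\co{d}\pipar d}$ for $P=\context{\annotate{d}}$, so $R$ is series-parallel by fiat and you instead owe the (easy, inductive) fact that its denotation is the vertex-split DAG; the paper builds the DAG $D_2$ first and must then prove it $N$-free via Theorem~\ref{theorem:valdes} before a term $R$ denoting it can be claimed --- your framing quietly absorbs the paper's longest verification paragraph, which is a reasonable trade. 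On the obstacle you flag as the ``technical heart'': the paper does not in fact carry out the cross-valuation case analysis you anticipate --- it fixes the single valuation $v$ witnessing $\gischer{v}{P}\not\subseteq\gischer{v}{Q}$, performs the whole construction there, and relies on the syntactic uniformity of $R$ for the rest. The clean way to discharge your worry is to run the argument at an injective (``generic'') valuation: there, the coherence condition $m\inter_d n$ forces the labels of $m$ and $n$ to be syntactically complementary, so the same vertex maps $g$ and $h$ are interaction homomorphisms for every other valuation and no repositioning of the merge is needed. With that observation your sketch closes; without it, both your write-up and, arguably, the paper's leave the ``for all valuations'' quantifier in the conclusion resting on an unstated genericity argument.
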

\begin{proof}
Assume that $P$ and $Q$ are series-parallel terms such that $\dinter{v}{P} \subset \dinter{v}{Q}$ for all valuations $v$. Also assume that $\gischer{v}{P} \not\subset \gischer{v}{Q}$ for some valuation $v$. %Consider the second case where $\gischer{v}{P} \not\subset \gischer{v}{R}$ and $\dinter{v}{P} \subset \dinter{v}{R}$.
Since $P$, $Q$ are series-parallel terms, there exist series-parallel DAGs $D_0 = \left(V_0,E_0,\mu_0\right)$, $D_1 = \left(V_1,E_1,\mu_1\right)$ such that $\iota_i D_0 = \dinter{v}{P}$ and $\iota_i D_1 = \dinter{v}{Q}$. Also, since $\dinter{v}{P} \subset \dinter{v}{Q}$, there exists an interacting homomorphism $f \colon D_1 \rightarrow D_0$.

There must be at least one interaction in the homomorphism $f$ exhibited above, i.e. there exists $m,n \in V_1$ such that $f(m) = f(n)$, $m \inter_{d} n$ and $f(m) = w \in V_0$ such that $\mu_0(w) = \annotate{d}$. Suppose otherwise, then for all $m,n \in V_1$ if $f(m) = f(n)$ then $m = n$, and so $f$ is bijective, since interacting homomorphisms are surjective. Hence $f$ is a smoothing homomorphism from $D_1$ to $D_0$, so $\gischer{v}{P} \subset \gischer{v}{Q}$ contradicting the above assumption.% that no such relation exists.

A DAG $D_2 = \left( V_2, E_2, \mu_2 \right)$ is constructed to differ from $D_0$ only by the interaction exhibited by $f$. Firstly, take $V_0$, remove vertex $w$ and include vertices $m$ and $n$, so $V_2 = V_0 \setminus \left\{ w \right\} \cup \left\{ m, n \right\}$.
%Now remove all edges from $E_0$ connecting to $w$, and include the corresponding edges where both $u$ and $v$ replace $w$.
Let $E_0 \setminus w$ be the set of edges in $E_0$ without the vertex $w$ and define $E_2 = \left(E_0 \setminus w\right)  \cup \left\{ (x,m) \mid (x,w) \in E_0 \right\} \cup \left\{ (m,x) \mid (w,x) \in E_0 \right\} \cup \left\{ (x,n) \mid (x,w) \in E_0 \right\} \cup \left\{ (n,x) \mid (w,x) \in E_0 \right\}$. Retain all the labels of $\mu_0$ except at $m$ and $n$, so if $x = m$ or $x=n$ then $\mu_2(x) = \mu_1(x)$ and otherwise $\mu_2(x) = \mu_0(x)$.

Construct two homomorphisms from $g \colon D_2 \rightarrow D_0$ and $h \colon D_1 \rightarrow D_2$ as follows.
\[
g(x) = \left\{ 
  \begin{array}{l l}
    f(x) & \quad \text{if $x = m$ or $x = n$}\\
    x & \quad \text{otherwise}\\
  \end{array} \right.
\qquad\quad
h(x) = \left\{ 
  \begin{array}{l l}
    x & \quad \text{if $x = m$ or $x = n$}\\
    f(x) & \quad \text{otherwise}\\
  \end{array} \right.
\]
Clearly $f = g \circ h$. Furthermore, both $g$ and $h$ are interacting homomorphisms by the following arguments.
Check that $g$ is a graph homomorphism, by case analysis. Only one case is presented. By definition of $E_2$, if $(m, x) \in E_2$ then $(m, x) \in \left\{ (m,x) \mid (w,x) \in E_0 \right\}$, thus $(g(m),g(x)) = (w,x) \in E_0$. 
Also check that $g$ is an interaction homomorphism, as follows: If $g(x)=g(y)$ then either $x=y$, or $x=m$ and $y=n$. Clearly, $m$ and $n$ are not connected in $E_2$ and both $\mu_2(m) = \mu_1(m)$ and $\mu_2(n) = \mu_1(n)$ hold, so $m \inter_d n$ in $D_2$ and $\mu_0(f(m)) = \mu_0(w) = \annotate{d}$.
Check that $h$ is a graph homomorphism. Only one case is presented. If $(m, x) \in E_1$ then $(w, f(x)) \in E_0$ since $f$ is a graph homomorphism, thus $(m, f(x)) \in \left\{ (m,x) \mid (w,x) \in E_0 \right\}$ so $(m, f(x)) \in E_2$, by definition.
Now consider when $h(x) = h(y)$ either $x=y$ or $x,y\not\in\left\{m,n\right\}$, hence $f(x) = f(y)$, thus $x \inter_d y$ since $f$ is an interacting homomorphism. Suppose, without loss of generality, that $x = m$ and $y\not\in\left\{m,n\right\}$, so $m = f(x)$, but $m \not \in V_0$ contradicting the definition of $f$. Thus $h$ is an interaction homomorphism.

Furthermore, the constructed DAG, $D_2$, is series-parallel. Suppose otherwise, then there exists an $N$-shape isomorphic to a sub graph of $D_2$. Now consider the image of the $N$-shape under $g$. Either zero or one nodes in the $N$-shape are $m$ or $n$ so the image of the $N$ shape is an $N$-shape in $D_0$. By Theorem~\ref{theorem:valdes}, this contradicts the fact that $D_0$ is series-parallel. Now, suppose that both $m$ and $n$ are in the $N$-shape. Since $m \inter_d n$ in $D_2$, $m$ and $n$ are not connected, so an $N$-shape must be of the form $\left\{(m,x),(n,x),(n,y)\right\}$ or $\left\{(x,m),(x,n),(y,n)\right\}$. However $(m, x) \in D_2$ iff $(w, x) \in D_0$ iff $(n, x) \in D_2$ and $(x, m) \in D_2$ iff $(x, w) \in D_0$ iff $(x, n) \in D_2$, so neither shapes are sub-graphs of $D_2$. Thus $D_2$ is $N$-free, hence by Theorem~\ref{theorem:valdes}, $D_2$ is a series-parallel DAG.

Since, $D_2$ is a series-parallel DAG, there exists a series-parallel term $R$ such that $\dinter{v}{R} = \iota_i D_2$. Since $g: D_2 \rightarrow D_0$ exhibiting an interaction and $h \colon D_1 \rightarrow D_2$, the following inequalities hold $\dinter{v}{P} \subset \dinter{v}{R}$ and $\dinter{v}{R} \subseteq \dinter{v}{Q}$. Since $\mu_2(w) = \annotate{d}$, the sub-term $\annotate{d}$ must appear in the process $P = \context{\annotate{d}}$, for some context $\context{}$. Also, since $m \inter_d n$ and, the edges of $D_0$ differs from those of $D_2$ only in that the edges connected to $w$ in $D_0$ are instead connect to both $m$ and $n$ in $D_2$,
%for all $x$, $(x,w) \in E_0$ iff $(x,m),(x,n) \in E_2$, and $(w,x) \in E_0$ iff $(x,m),(x,n) \in E_2$
the following holds $R \equiv \context{d \pipar \co{d}}$. Thus the interact rule proves that $P$ yields $R$, as required.

%\qed
\end{proof}

To clarify the significance of the interpolation lemmas consider the running example. The initial configuration of processes is denoted by the following DAG ($D_1$).
\begin{comment}
\[
\xymatrix@C=1pc@R=1pc{
&
   \textit{water}
   &
   \textit{flour}
 &&
   \textit{dough}
 &
 \textit{pan}
\\
   \co{\textit{water}}
   &
   \co{\textit{flour}}
   &
   \co{\textit{dough}} \ar[ul]\ar[u]
   &
   \co{\textit{pan}}
   &
 \co{\textit{chapati}} \ar[u]\ar[ur]
 &
 \co{\textit{pan}} \ar[ul]\ar[u]
}
\]
\end{comment}
\[
D_1:
\xymatrix@C=1pc@R=2pc{
 \co{\triple{\Turner}{\location}{\Tate}}
 &
 {\triple{\Turner}{\location}{\Tate}}
 &
 {\triple{\Turner}{\location}{\London}}
 &
 {\triple{\Turner}{\location}{\Baltic}}
 \\
 \co{\triple{\Turner}{\location}{\London}}
 &
 \co{\triple{\Turner}{\location}{\Baltic}} \ar[u]\ar[ur]
 &
 \co{\triple{\Turner}{\location}{\UK}} \ar[u]\ar[ul]
 &
 \co{\triple{\Turner}{\location}{\Tate}} \ar[u]
}
\]
There exists an interaction homomorphism from $D_1$ to the DAG $D_0$ below, which appears also in Sec.~\ref{section:motivation}.
\[
D_0:
\xymatrix@C=2pc@R=1.5pc{
 \ovalb{\triple{\Turner}{\location}{\Tate}}
 &
 \ovalb{\triple{\Turner}{\location}{\London}}
 \\
 \ovalb{\triple{\Turner}{\location}{\Baltic}} \ar[u]\ar[ur]
&
 \co{\triple{\Turner}{\location}{\UK}} \ar[u]\ar[ul]
 \\
 \co{\triple{\Turner}{\location}{\Tate}} \ar[u]
 &
}
\]
Now, by applying Lemma~\ref{lemma:interaction} three times, we can construct a series of DAGs cumulating in $D_2$ presented below, such that the following properties hold. There exist interaction homomorphisms from $D_1$ to $D_2$ and from $D_2$ to $D_0$, and the process denoted by $D_0$ yields the process denoted by $D_2$ using the interact rule three times. Furthermore, the homomorphism from $D_1$ to $D_2$ is a smoothing homomorphism. Hence, by Lemma~\ref{lemma:gischer}, the process denoted by $D_2$ can be transformed using the sequence rule applied a finite number of times into the process denoted by $D_1$.
%Notice that the final diagram consisting of only artefacts, stored data and the causal relationships between then match the form of the provenance diagrams described in Sec.~\ref{section:motivation}.
\[
D_2:
\xymatrix@C=2pc@R=2pc{
 \co{\triple{\Turner}{\location}{\Tate}}
 &
 {\triple{\Turner}{\location}{\Tate}}
 &
 \co{\triple{\Turner}{\location}{\London}}
 &
 {\triple{\Turner}{\location}{\London}}
 \\
 \co{\triple{\Turner}{\location}{\Baltic}} \ar[u]\ar[ur]\ar[urr]\ar[urrr]
 &
 {\triple{\Turner}{\location}{\Baltic}}\ar[ul]\ar[u]\ar[ur]\ar[urr]
 &
 \co{\triple{\Turner}{\location}{\UK}} \ar[ull]\ar[ul]\ar[u]\ar[ur]
 \\
 &
 \co{\triple{\Turner}{\location}{\Tate}} \ar[u]\ar[ul]
}
\]
Thereby the existence of the interaction homomorphism between $D_1$ and $D_0$ guarantees the existence of a deduction from the process denoted by $D_0$ to the process denoted by $D_1$ using the interact and sequence rules. Indeed the processes and deductions are presented in the example at the end of Sec.~\ref{section:processes}, where the first process in the example is denoted by $D_1$, the second by $D_2$ and the third by $D_0$.

\begin{comment}
\[
\xymatrix@C=1pc@R=1pc{
   \co{\textit{water}}
   &
   \textit{water}
   &
   \co{\textit{flour}}
   &
   \textit{flour}
\\
   &
   \co{\textit{dough}} \ar[ul]\ar[u]\ar[ur]\ar[urr]
   &
   \textit{dough} \ar[ull]\ar[ul]\ar[u]\ar[ur]
 &
 \co{\textit{pan}}
 &
 \textit{pan}
\\
 &&
 \co{\textit{chapati}} \ar[ul]\ar[u]\ar[ur]\ar[urr]
 &
 \co{\textit{pan}} \ar[ull]\ar[ul]\ar[u]\ar[ur]
}
\qquad
\xymatrix@C=1pc@R=1pc{
   \annotate{\textit{water}}
   &
   \annotate{\textit{flour}}
 \\
 \annotate{\textit{dough}} \ar[u]\ar[ur]
 &
 \annotate{\textit{pan}}
 \\
 \co{\textit{chapati}} \ar[u]\ar[ur]
 &
 \co{\textit{pan}} \ar[ul]\ar[u]
}
\]
\end{comment}
\begin{comment}
\begin{lemma}
The following properties hold:
\begin{itemize}
\item If, for all $v$, $\dinter{v}{P} \subseteq \dinter{v}{R}$ and $\dinter{v}{Q} \subseteq \dinter{v}{R}$ are provable, then, for all $v$, $\dinter{v}{P \oplus Q} \subseteq \dinter{v}{R}$ is provable.
\item If, for all $v$, $\dinter{v}{P} \subseteq \dinter{v}{Q}$ is provable, then, for all $v$, $\dinter{v}{P} \subseteq \dinter{v}{P \oplus Q}$ is provable.
\item If, for all $v$, $\dinter{v}{P} \subseteq \dinter{v\mathclose{\left[x \mapsto a\right]}}{Q}$ is provable, then, for all $v$, $\dinter{v}{P} \subseteq \dinter{v}{\cexists{x}Q}$ is provable.
\item If, for all $v$, $\dinter{v}{P} \subseteq \dinter{v\mathclose{\left[x \mapsto y\right]}}{Q}$ is provable and $y$ is not free in $\cexists{x} P$, then, for all $v$, $\dinter{v}{\cexists{x} P} \subseteq \dinter{v}{Q}$ is provable.
\end{itemize}
\end{lemma}
\end{comment}

Every process can be written in a normal form, using the structural congruence, as a sum of series-parallel process with all the existential quantification moved to the front of the process, i.e.~for all $P$ there exist series-parallel processes $A_i$ such that $P \equiv \cexists{\vec{x}} \Sigma_{i\in I} A_i$. It is then easy to show that a finite number of choice and exists rules can be applied to prove any inequality between ideals. This establishes the completeness of the calculus with respect to the denotation, stated as follows.

\begin{comment}
For all $i \in I$, there exists some $j_i \in J$ such that for all valuations $v$ $\dinter{v}{A_i} \leq \dinter{v}{B_{j_i}}$, which is provable by Lemma~\ref{lemma:interaction_interpolation}. Hence for all valuations $v$ $\dinter{v}{\Sigma_{i \in I} A_i} \leq \dinter{v}{\Sigma_{i \in I} B_{j_i}}$ is provable, by deep inference. Clearly, for all valuations $v$, $\dinter{v}{\Sigma_{i \in I} B_{j_i}} \subseteq \dinter{v}{\Sigma_{j \in J} B_{j_i}}$ is provable using a finite number of applications of the choice rule. Hence for all valuations $v$ $\dinter{v}{\Sigma_{i \in I} A_i} \leq \dinter{v}{\Sigma_{j \in J} B_{j_i}}$ is provable.
\end{comment}

\begin{theorem}[completeness]
If $\denotation{v}{P}_i \subseteq \denotation{v}{Q}_i$ for all valuations $v$, then $P$ yields $Q$.
\end{theorem}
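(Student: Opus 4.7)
My plan is to reduce the completeness statement to the interpolation Lemmas~\ref{lemma:gischer} and~\ref{lemma:interaction}, which cover only the series-parallel fragment, by first using the structural congruence to separate the additive and multiplicative structure of $P$ and $Q$. Applying the distributivity rules in Fig.~\ref{figure:struc} together with $\alpha$-conversion, any process is congruent to a sum of series-parallel terms sitting under an outer block of existentials; I would therefore write $P \equiv \cexists{\vec{x}} \bigoplus_{i \in I} A_i$ and $Q \equiv \cexists{\vec{y}} \bigoplus_{j \in J} B_j$ with each $A_i$ and $B_j$ a series-parallel term containing no $\oplus$ or $\cexists{x}$. The denotation then unfolds as
\[
\denotation{v}{P}_i \;=\; \bigcup_{\vec{a}} \bigcup_{i \in I} \denotation{v[\vec{x}\mapsto\vec{a}]}{A_i}_i,
\]
and symmetrically for $Q$.

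Next, I would reduce to the series-parallel case one summand at a time. Since $A_i$ is series-parallel, $\denotation{v[\vec{x}\mapsto\vec{a}]}{A_i}_i$ is the principal interacting ideal generated by a single labelled DAG $D_{i,\vec{a}}$, which is its minimal element. The hypothesis forces $D_{i,\vec{a}} \in \denotation{v}{Q}_i$, yielding witnesses $j(i,\vec{a}) \in J$ and names $\vec{b}$ with $D_{i,\vec{a}} \in \denotation{v[\vec{y}\mapsto\vec{b}]}{B_{j(i,\vec{a})}}_i$. Choosing $\vec{a}$ fresh with respect to the labels occurring in $Q$ forces the dependence of the witness on $\vec{a}$ to be entirely positional: $\vec{b}$ can be expressed as $\vec{t}\sub{\vec{x}}{\vec{a}}$ for a template $\vec{t}$ depending only on $i$, and upward-closure under interacting homomorphisms then promotes this to the uniform inclusion $\denotation{v'}{A_i}_i \subseteq \denotation{v'}{B_{j_i}\sub{\vec{y}}{\vec{t}}}_i$ for every valuation $v'$. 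Alternating Lemma~\ref{lemma:interaction} to discharge interactions one at a time with Lemma~\ref{lemma:gischer} for the residual smoothing then yields a finite deduction that $A_i$ yields $B_{j_i}\sub{\vec{y}}{\vec{t}}$ using only the sequence and interact rules.

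The additive layer is then assembled by deep inference. Inside the context $\cexists{\vec{x}} \bigoplus_i [\,\cdot\,]$, each name component of $\vec{t}$ is lifted back to an existential by one application of the exists rule, while each variable component is absorbed into the outer $\cexists{\vec{x}}$ by $\alpha$-renaming; together these reconstruct $\cexists{\vec{y}} B_{j_i}$ in place of $A_i$. The choice rule combined with idempotency of $\oplus$ then extends this to $\bigoplus_{j \in J} \cexists{\vec{y}} B_j \equiv Q$. Carrying this chain out inside every summand of $P$ and collapsing the resulting $\bigoplus_{i \in I} Q$ to $Q$ by idempotency establishes that $P$ yields $Q$.

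The main obstacle I anticipate is the parametricity step that promotes the \emph{pointwise} choices of witnesses $\bigl(j(i,\vec{a}), \vec{b}\bigr)$ to a single syntactic template $(j_i, \vec{t})$ producing a uniform-in-$v'$ inclusion, as required by the interpolation lemmas. The argument rests on the principality of the interacting ideal of each series-parallel term and on the observation that every label of the generating DAG is either a label of $A_i$ itself or a component of the fresh tuple $\vec{a}$, so the matching witness in $B_j$ depends on $\vec{a}$ only through positional substitution. A secondary subtlety is orchestrating the interleaving of Lemmas~\ref{lemma:interaction} and~\ref{lemma:gischer}: the former only makes progress when a genuine interaction is available, so all interactions must be discharged first and Lemma~\ref{lemma:gischer} invoked last to absorb the purely smoothing residue.
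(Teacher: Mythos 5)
Your proposal follows the paper's own route: rewrite both processes into the normal form $\cexists{\vec{x}} \bigoplus_{i} A_i$ via the structural congruence, reduce the ideal inclusion to per-summand inclusions between series-parallel terms, discharge those by iterating Lemma~\ref{lemma:interaction} until only smoothing remains and then applying Lemma~\ref{lemma:gischer}, and reassemble the additive layer with the exists and choice rules plus idempotency under deep inference. The paper gives this argument only in outline, so your freshness/parametricity step --- promoting pointwise witnesses to a uniform template so the interpolation lemmas' ``for all valuations'' hypothesis is met --- supplies a detail the paper leaves implicit rather than departing from its approach.
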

%
\begin{comment}
\begin{proof}
\nb{
Firstly, select normal form.

Prove that the proof search in series-parallel terms can be split into two stages. The fist stage only applies sequence and the second stage only applies interaction.

The disjunctions can treated by lub and projection. Similarly for exists.
}
\end{proof}
\end{comment}

Thus the model based on ideals of labelled series-parallel DAGs is a sound and complete model of processes. The labelled DAGs are inspired by the guidelines provided for provenance diagrams~\cite{Moreau2011}; while, the series-parallel processes are motivated by calculi which model systems which produce provenance diagrams. Hence a formal connection between series-parallel DAGs and processes is established. Specifically, provenance diagrams are the denotation of series-parallel processes consisting of only artefacts and stored data. Hence provenance diagrams are contained within a denotation for a provenance tracking calculus. Due to soundness and completeness of the calculus with respect to the denotation, provenance diagrams can be considered in a new operational language based setting.

\begin{comment}
For example the set containing the of pomsets is an ideal.
\[
\left\{
\xymatrix@C=1pc@R=1pc{
d & \co{d}
},
\xymatrix@C=1pc@R=1pc{
d &
\co{d}\ar[l]
},
\xymatrix@C=1pc@R=1pc{
\co{d} &
d\ar[l]
},
\xymatrix@C=1pc@R=1pc{
\annotate{d}
},
\right\}
\]
}

\subsection{Comparison to related work}

This topics has seen some recent interest in the concurrency community with two approaches to extending Kleene algebras for concurrency~\cite{Prisacariu2010,Hoare2011}. Both models are strongly related to the approach taken by the early work of Gischer, who provided the first free construction of such as system~\cite{Gischer1988}.

\nb{
There has been recent interest in Concurrent Kleene Algebras~\cite{Prisacariu2010,Hoare2011}.

Gischer provided the first free construction of ...~\cite{Gischer1988}.
}
\end{comment}

%\input{application}
\begin{comment}
\section{Further Work}

\nb{
Demonstrate that original update calculus can be extended with deep queries.

Propose that deep queries can be incorporated in such an update calculus.

Thus the updates themselves can be dependent on the provenance. Furthermore the operational semantics of the updates should track the DAGs. 

Significance of tensor for atomicity.

\subsection{Provenance Semirings}

Explain provenance semirings.

Connect them with SPARQL and quantales.
}
\end{comment}

\section{Conclusion}

%Modern applications are demanding models involving concurrency which are data centric rather than channel centric; Update calculus, Concurrent Constraint Programming, Distributed Databases, Tuple models\cite{Horne2011}. .. 

Provenance is a key problem in processing data which is particularly important in systems that publish data on the Web, such as the Web of Linked Data~\cite{Dezani2012,Horne2011}. Already certain aspects of provenance are gifted with deep theoretical results~\cite{Green2007}. However, there is no sound and complete model for the aspects of provenance tracking considered in this work: specifically ``how" provenance which indicates causal relationships; and a provenance tracking calculus which produces such diagrams by recording interactions between processes and stored data. The relationship between the diagrams and the calculus is exhibited by providing a sound and complete denotational semantics which contains such provenance diagrams.

%The model builds on recent interest in the concurrency community in bimonoid based models such as concurrent and synchronous Kleene algebras~\cite{Prisacariu2010,Hoare2011}. The construction using ideals of series-parallel labelled DAGs would be sufficient to denote these extensions of Kleene algebras.
%The model can be extended to cover the applications covered in this work, by defining the Kleene star as limits of ideals in the model.
%Here the novel feature which is denoted using ideals is the existential quantifier, which can be applied to access names contained in data such as URIs in Linked Data. %The construction for existential quantification is similar to the construction for disjunction.

The examples presented in this paper illustrate that tracking provenance is 
particularly challenging in a concurrent setting. The causal aspects of data 
provenance are closely related to the operational semantics of the systems 
involved. Hence when considering concurrent systems, models of concurrency provide insight into problems associated with provenance in a concurrent setting. 
For instance, this work demonstrates that provenance diagrams that arise from concurrent interactions form series-parallel DAGs. Consequently, certain graph homomorphism problems, which can be employed to query provenance diagrams, can be solved more efficiently for series-parallel digraphs~\cite{Valdes1979}.
This model is proposed as a foundation for ``how" provenance, which can be applied as a subjective measure of the quality of data.

%For the first time, the intimate connection between semantics of updates and provenance diagrams is formalised.

%The proof system provides an operational semantics for a process calculus in which data and updates interact. The operational semantics of the process calculus track the provenance of data with respect to the updates which are applied. This clarifies the processes which give rise to provenance diagrams and the semantics of the diagrams themselves.

\paragraph{Acknowledgements.}
We thank the reviewers for feedback that improved the exposition.
The work was supported by a grant of the Romanian National Authority for Scientific Research, CNCS-UEFISCDI, project number PN-II-ID-PCE-2011-3-0919.

\bibliographystyle{eptcs}
\bibliography{biblio}
\end{document}